\newfont{\bbb}{msbm10 scaled 500}
\newfont{\bb}{msbm10 scaled 1100}
\newcommand{\EE}{\mbox{\bb E}}
\newcommand{\Rc}{{\cal R}}
\newcommand{\Bc}{{\cal B}}
\newcommand{\argmax}{\operatornamewithlimits{argmax}}
\newtheorem{theorem}{Theorem}
\newtheorem{lemma}{Lemma}
\begin{document}


\title{  Approaching  Optimal Centralized Scheduling with CSMA-based  Random Access   over Fading Channels
    \thanks{
    Mehmet Karaca and Bj{\"o}rn Landfeldt are with the Department  of Electrical and Information Technology, Lund University, Lund, Sweden. Email: \{mehmet.karaca, bjorn.landfeldt\}@eit.lth.se.}
 \thanks{This work was sponsored by the EC  FP7  Marie Curie IAPP Project 324515, ``MeshWise".}}
\author{\IEEEauthorblockN{Mehmet Karaca and Bj{\"o}rn Landfeldt \\}
\IEEEauthorblockA{}}

\maketitle
\begin{abstract}
Carrier Sense Multiple Access (CSMA) based distributed  algorithms can attain the largest capacity region as the centralized Max-Weight  policy does. Despite their capability of achieving throughput-optimality, these algorithms can either incur large delay and have large complexity or only operate over non-fading channels. In this letter, by assuming arbitrary back-off time we first propose a fully distributed randomized algorithm whose performance can be pushed to the performance of the centralized Max-Weight policy  not only in terms of throughput  but also in terms of delay for completely-connected interference networks with fading channels. Then, inspired by the proposed algorithm we introduce an implementable distributed algorithm for practical networks with a reservation scheme. We show that the proposed practical algorithm can still achieve the performance of the centralized Max-Weight policy. 
\end{abstract}
\begin{IEEEkeywords}
Max-Weight scheduling, CSMA, queue stability, distributed algorithm, fading channels.
\end{IEEEkeywords}
\vspace{-0.3cm}
\section{Introduction}
The major challenge of  channel access management in wireless networks is optimally scheduling users so that interference is eliminated and the maximum performance in terms of throughput, delay, jitter, etc., can be achieved. The centralized Max-Weight algorithm (MW)~\cite{MW} can achieve maximum throughput, and can stabilize the network. However, due to the huge complexity of MW algorithm Carrier
Sense Multiple Access (CSMA) based algorithms have received significant attention from many researchers. In~\cite{Jiang:Allerton08},~\cite{Ni:Infocom10},~\cite{Rajagopalan:Sigmetric09} and~\cite{Shah:focs11}, CSMA based throughput optimal algorithms are developed for nonfading channels, which, however, suffer from either poor delay performance or high complexity. The authors in~\cite{Li:TWC} present an optimal scheduling algorithm for deadline constrained-traffic by assuming continuous back-off time  which is impossible to implement, and also the optimality does not hold for practical systems. Also, the performance of the discrete time version of the algorithm in~\cite{Li:TWC} can be quite poor. The authors in~\cite{Huang:Net} try to address the poor delay performance of CSMA based algorithms by assuming non-fading channels. The authors in~\cite{Karaca:gc15} and~\cite{Chiang:secon13} try to develop MW type algorithms for practical 802.11 networks without providing any analytical guarantees. 

In order to be feasible and practical for wireless systems, any CSMA based algorithm should; i-) have low-complexity, and require only local information (no message passing); ii-) have good delay performance; iii-)  perform well over general channel and arrival conditions. Since all these conditions can be satisfied by a centralized system with complete network information, we ask the following question: \textit{Is it possible that a CSMA based distributed algorithm  can achieve the same performance as the centralized solution in all these dimensions under certain conditions}?  We find that the answer to this question is yes under the condition of a complete graph network. Our contributions are summarized as follows; i) when arbitrary back-off is allowed, we first develop an optimal distributed algorithm which schedules the user with the maximum weighted
rate at every time slot without requiring any message passing over general fading and traffic conditions as the centralized MW algorithm does.  Hence, the same performance as that of the centralized MW policy is indeed achievable; ii) Then,  we design an implementable distributed
algorithm to be compatible with practical systems, which can still schedule the user with the maximum weighted rate with low overhead associated with  contention resolution.
\vspace{-0.5cm}
\section{System Model}
We consider a  network where $N$ users contend for transmission in the same contention domain. Time
is slotted, $t\in \{0,1,2,\ldots\}$, and each user's wireless channel is assumed to be independent across
users and time. The gain of the channel is constant over the
duration of a time slot but varies between slots. As in practice, we consider that only a fixed set of data rates
$\Rc=\{r_1,r_2,\ldots,r_L\}$ can be supported. The
distribution of the channel rates for user $n$ is denoted as
$\Pr[R_n(t)=r_l]=p_{n}^l$, for all $l=\{1,\ldots,L\}$, and $R_n(t) \in \Rc$ for all $n$ and $t$. Let
 $I_n(t)$ be the scheduler decision, where
$I_n(t)=1$ if user  $n$ is scheduled for transmission in slot $t$,
and $I_n(t)=0$ otherwise. Each user maintains a separate queue, and packets arrive
according a stationary arrival process that is independent across
users and time slots. Let $A_n(t)$ be the amount of data arriving
into the queue of user $n$ at time slot $t$. Let $Q_n(t)$ and
$R_n(t)$ denote the queue length and transmission rate of user $n$
at time $t$, respectively. The queue
length variation for user $n$ is given as $Q_n(t+1)=[Q_n(t)+A_n(t)-R_{n}(t)I_n(t)]^+$,
where $[y]^+=\max(y,0)$. Let $w_n(t)$ be the weighted rate of user $n$ at time slot $t$,
where $w_n(t)=Q_n(t)R_n(t)$. In their seminal paper, Tassiulas and
Ephremides~\cite{MW} have shown that Max-Weight algorithm scheduling
the user $k=\argmax_n w_n(t)$ at every time slot can stabilize  the network
whenever this is possible, and we denote $k$ as the user with the highest weighted rate at each slot. Unlike previous work, our aim is to develop a fully distributed algorithm that guarantees to schedule user $k$ at every time slot. Next, we give our Distributed MW policy with Arbitrary Back-off Time (DMW-AB).
\vspace{-0.3cm}
\section{DMW with Arbitrary Backoff}
The basic principle behind our algorithm is to determine a back-off procedure
in CSMA/CA (CSMA/Collision Avoidance) systems where users' back-off
time is determined by their queue sizes and channel
conditions. 
\textbf{DMW-AB} is performed in a two step
procedure: In the first step, each user $n \in \{1,2,\cdots,N\}$
generates an exponentially distributed random variable denoted by
$X_n(t)$ with rate $r_n(t)=f(w_n(t))$, which gives the random back-off
duration of user $n$ at that time slot. Then, if user $n$ does not
sense any other transmission until its random back-off duration
expires, it starts transmitting.  Otherwise it keeps silent.

Clearly, the performance of DMW-AB depends on the function $f(x)$. Next, we give our main theorem which will characterize $f(x)$ to make DMW-AB throughout-optimal. We note that our $f(x)$ function may not be unique, and there may be other functions that satisfy our theoretical results.
\begin{theorem}
\label{thm:DMWAB} Given any $b>1$, if $f(x)=b^x$ and
$R_n(t)\geq R_{min}$, and $R_{min} >0$, $\forall n, t$ then DMW-AB is throughput-optimal. 
\end{theorem}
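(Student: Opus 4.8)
The plan is to establish throughput-optimality through a Lyapunov-drift comparison against the centralized Max-Weight policy. I would take the quadratic Lyapunov function $V(t)=\tfrac12\sum_n Q_n(t)^2$ and, using the recursion $Q_n(t+1)=[Q_n(t)+A_n(t)-R_n(t)I_n(t)]^+$, bound the one-slot conditional drift as
$$\EE[V(t+1)-V(t)\mid \mathbf{Q}(t)] \le B + \sum_n Q_n(t)\lambda_n - \EE\Big[\sum_n w_n(t)I_n(t)\,\Big|\,\mathbf{Q}(t)\Big],$$
where $\lambda_n=\EE[A_n(t)]$ and $B$ is a finite constant absorbing the second-moment terms (finite because $\Rc$ is a finite set, so $R_{min}\le R_n(t)\le r_L$, and arrivals have bounded second moments). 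The entire difference between DMW-AB and MW lives in the last, scheduling-dependent term.

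Next I would characterize the scheduling distribution. Since each $X_n(t)$ is exponential with rate $b^{w_n(t)}$ and the back-offs are independent, the minimum is almost surely attained by a unique user, so exactly one user transmits per slot, and the standard competing-exponentials identity gives
$$\Prob[I_n(t)=1\mid \mathbf{Q}(t)] = \frac{b^{w_n(t)}}{\sum_{m} b^{w_m(t)}}.$$
Hence the expected served weight is $\sum_n w_n(t)\,b^{w_n(t)}/\sum_m b^{w_m(t)}$, to be compared with the MW value $w_k(t)=\max_n w_n(t)$.

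The heart of the argument — and what I expect to be the main obstacle — is a lemma showing the scheduling loss is bounded by a constant independent of the queue state. Writing $\delta_n(t)=w_k(t)-w_n(t)\ge 0$ and lower-bounding the denominator by its largest term $b^{w_k(t)}$,
$$w_k(t)-\sum_n w_n(t)\frac{b^{w_n(t)}}{\sum_m b^{w_m(t)}} = \sum_n \delta_n(t)\frac{b^{w_n(t)}}{\sum_m b^{w_m(t)}} \le \sum_n \delta_n(t)\,b^{-\delta_n(t)}.$$
Each summand has the form $g(\delta)=\delta b^{-\delta}$ on $\delta\ge 0$, which is maximized at $\delta=1/\ln b$ with value $1/(e\ln b)$; summing over the at most $N-1$ nonzero terms gives the uniform bound $C:=(N-1)/(e\ln b)$. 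The crucial point is that $b>1$ makes $b^{-\delta}$ a genuine decay, so the loss cannot accumulate even as queues and weights grow without bound — this is exactly where the exponential choice $f(x)=b^x$ is essential, and converting a randomized, tie-free contention winner into a provably near-optimal scheduler without message passing is the delicate step.

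Combining the three steps yields $\EE[V(t+1)-V(t)\mid\mathbf{Q}(t)] \le B + C + \sum_n Q_n(t)\lambda_n - \EE[w_k(t)\mid\mathbf{Q}(t)]$, which is the Max-Weight drift bound inflated only by the additive constant $C$. For any arrival-rate vector strictly inside the capacity region there is an $\epsilon>0$ and a stationary single-transmitter policy whose expected service dominates $\lambda+\epsilon\mathbf{1}$; since MW maximizes the realized weighted service over all feasible one-transmitter schedules, $\EE[w_k(t)\mid\mathbf{Q}(t)]\ge \sum_n Q_n(t)(\lambda_n+\epsilon)$, and therefore $\EE[V(t+1)-V(t)\mid\mathbf{Q}(t)]\le (B+C)-\epsilon\sum_n Q_n(t)$. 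The Foster–Lyapunov criterion then gives strong stability of all queues, establishing throughput-optimality; the hypotheses $R_n(t)\ge R_{min}>0$ and the finiteness of $\Rc$ enter only to keep $B$ finite and the weighted-rate dynamics non-degenerate.
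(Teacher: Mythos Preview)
Your proof is correct but follows a genuinely different route from the paper's. The paper does not carry out a direct Lyapunov comparison; instead it invokes a black-box sufficient condition from Eryilmaz et al.: a randomized policy is throughput-optimal if, for every $0<\epsilon,\delta<1$, there is a threshold $B$ such that whenever $\max_n Q_n(t)>B$ the policy selects a user with weight at least $(1-\epsilon)w_{\max}(t)$ with probability at least $1-\delta$. The paper bounds the probability of choosing a ``bad'' user by $N b^{-\epsilon w_{\max}(t)}$ and then uses $w_{\max}(t)\ge R_{\min}\max_n Q_n(t)$ to produce the threshold $B=\frac{\log_b(N)+\log_b(1/\delta)}{\epsilon R_{\min}}$; this is exactly where the hypothesis $R_{\min}>0$ does real work. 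Your argument replaces that machinery with the pointwise inequality $w_k(t)-\sum_n w_n(t)\pi_n(t)\le (N-1)/(e\ln b)$, obtained by lower-bounding the softmax denominator by its largest term and maximizing $\delta b^{-\delta}$. This is sharper and more self-contained: it yields an explicit queue-independent additive loss, plugs directly into the standard MW drift bound, and---as you observe---does not actually rely on $R_{\min}>0$ for the key step (only $R_n(t)\le r_L$ and finite arrival second moments are needed for the constant $B$ in the drift). The paper's approach, on the other hand, makes transparent \emph{why} large $b$ helps: the threshold $B$ on queue sizes shrinks as $b$ grows, which is the observation driving their Lemma~1 and the subsequent discussion of delay.
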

\begin{proof}
The proof is provided in Appendix \ref{sec:DMWAB}.
\end{proof}
Although for any $b>1$, DMW-AB is optimal, the delay performance of it can be poor since Theorem 1 only holds when the queue sizes are sufficiently large.  However, in Lemma 1 we will show that the condition on the queue sizes can become less restrictive for large values of $b$ by guaranteeing to schedule the user $k$  at every time slot.
\begin{lemma}
\label{lemma:1} When $f(x)=b^x$ and as $b \rightarrow \infty$, the
probability that the user with the maximum weighted rate is
scheduled by DWM-AB algorithm at each time slot goes to 1.
\end{lemma}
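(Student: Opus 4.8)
The plan is to exploit the order statistics of independent exponential random variables. First I would recall the standard fact that if $X_1,\ldots,X_N$ are independent with $X_n$ exponentially distributed of rate $r_n$, then the probability that $X_n$ is the smallest among them is exactly $r_n/\sum_{m=1}^N r_m$. Under DMW-AB the transmitting user is precisely the one whose back-off $X_n(t)$ expires first, so the scheduling probability of each user is governed by this ratio with $r_n(t)=f(w_n(t))=b^{w_n(t)}$.

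Next I would condition on the network state at slot $t$, i.e.\ on the queue lengths $\{Q_n(t)\}$ and the channel rates $\{R_n(t)\}$, so that the weighted rates $\{w_n(t)\}$ become deterministic numbers. Writing $W^*(t)=\max_m w_m(t)$ and $k=\argmax_n w_n(t)$, the conditional probability that user $k$ is scheduled is
$$\frac{b^{w_k(t)}}{\sum_{m=1}^N b^{w_m(t)}}=\frac{1}{1+\sum_{m\neq k} b^{\,w_m(t)-W^*(t)}}.$$
Since $w_m(t)-W^*(t)<0$ for every $m$ with $w_m(t)<W^*(t)$, each such term $b^{\,w_m(t)-W^*(t)}$ tends to $0$ as $b\rightarrow\infty$, and hence the whole conditional probability tends to $1$. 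This is just the convergence of the softmax to the argmax as the inverse temperature $b$ grows.

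The one subtlety I need to handle carefully is ties, which occur with positive probability here because the weighted rates take values in a discrete set. When several users attain $W^*(t)$, no single index receives probability approaching $1$; instead I would show that the probability that \emph{some} maximizer is scheduled goes to $1$. Letting $S(t)=\{m:w_m(t)=W^*(t)\}$, this probability equals $|S(t)|/\bigl(|S(t)|+\sum_{m\notin S(t)} b^{\,w_m(t)-W^*(t)}\bigr)$, which again tends to $1$; since scheduling any element of $S(t)$ is exactly what the Max-Weight rule permits, this is what we want. Finally, because the convergence holds for every fixed state and all the probabilities are bounded by $1$, I would pass to the unconditional statement by taking expectation over the state via dominated convergence. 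The core limit is elementary; the only real obstacle is bookkeeping the tie case cleanly so that the claim is about scheduling a maximizer rather than one fixed user.
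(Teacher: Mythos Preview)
Your proposal is correct and follows essentially the same route as the paper: compute the scheduling probability of the maximizer via the minimum-of-exponentials formula $r_k/\sum_m r_m = b^{w_k}/\sum_m b^{w_m}$, divide through by $b^{w_k}$, and let $b\to\infty$. You are in fact more careful than the paper, which simply asserts $w_k(t)>w_n(t)$ for all $n\neq k$ and does not address ties or the passage from conditional to unconditional probabilities.
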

\begin{proof}
The probability that the back-off timer of user $k$ expires first under DWM-AB algorithm is 
given by $\pi_{k}(t)=\frac{b^{w_{k}(t)}}{\sum_{n=1}^N b^{w_n(t)}}$
By dividing both numerator and denominator by $b^{w_{k}(t)}$, we
have,
\begin{align*}
\pi_{k}(t)=\frac{1}{1 + \sum_{\substack{\scalebox{0.7}{n=1}\\ \scalebox{0.7}n \neq \scalebox{0.7} k}}^N b^{w_n(t)-w_{k}(t)}}
\end{align*}
We know that $w_{k}(t) > w_n(t)$ $\forall n \neq k$. Hence,
taking $b \rightarrow \infty$ yields that $\lim_{b \rightarrow\infty}  \pi_{k}=\frac{1}{1 +0}=1$.
Thus, DMW-AB guarantees to schedule the user with the maximum weight. This completes the proof.
\end{proof}
With Theorem 1 and Lemma 1, we show that DMW-AB can achieve the same performance as that of the centralized MW policy in terms of the average delay and throughput.  Recall that DMW-AB algorithm assumes that users's back-off time can
take any positive values. However, this assumption does not hold 
in practical systems. For instance, in current IEEE 802.11
networks a discrete-time backoff scale is used. Next,
we introduce the practical version of DMW-AB.
\vspace{-0.32cm}
\section {DMW with Reservation }
In order to design a practical implementable algorithm, we
divide each time slot into a control slot and a data slot  whose duration depends on the transmission duration at that time. The
control slot consists of mini-slots that enable a collision-free
data transmission.  
Let $M(t)$ be the number of mini-slots used for
contention resolution at time $t$, which $M(t)$ is a random variable
depending on the weighted throughput of users, and let  $\EE[M(t)]$ be the expected number of
$M(t)$. Our practical algorithm namely DMW with Reservation Scheme (DMW-RS) is based on a reservation form,  in which at the beginning of
each mini-slot each user has to decide whether or not to attempt
transmission in the current data slot. Let $a_n(m, t)$ be the
attempt decision of user $n$ at mini-slot $m$ and time slot $t$. If
user $n$ attempts to transmit then $a_n(m,t)=1$. Otherwise,
$a_n(m,t)=0$. Let $\tau(t)$ be a network parameter which
determines the attempt decision of users as we will explain next.

The \textbf{DMW-RS} algorithm is performed as follows: First, at
the beginning of each mini-slot $m$, each user $n \in \{1,2,
\cdots,N\}$ generates an exponential random variable $X_n(m,t)$ with
rate $r_n(t)=b^{w_n(t)}$.  If $X_n(m,t) < \tau(t)$, then user  $n$
announces its intent by sending a short message (e.g., Request-to-Send Messages (RTS)), and
$a_n(m,t)=1$. Otherwise, user $n$ stays idle (i.e., $a_n(m,t)=0$)
and skips to the next mini-slot. If more than one user 
attempts transmission at mini-slot $m$, there will be a collision. Also, if any user does not intend to transmit then that mini-slot will be idle. In case of a collision or idle mini slot,  each user $n$ generates  a new exponential random
variable with rate $r_n(t)=b^{w_n(t)}$, follows the same procedure
and tries again. This process is iterated until there is only one
user attempting to access the channel, and that user transmits its
data in the data slot. 
We next show that DMW-RS can schedule the user with the maximum weighted throughout at every time.
The probability that only the user $k$ intents (e.g., only the user $k$ sends RTS packet) and
other users keep silent at a mini-slot is given by,
\begin{align}
P_{k}(t)=\left(1-e^{-\tau(t) r_k(t) }\right)\prod_{\substack{n=1\\ n \neq k}}^N
e^{-\tau(t) r_n(t)}\label{eq:success}.
\end{align}
One can show that  $P_{k}(t)$ is a concave function of $\tau(t)$.
Let $\tau^*(t)$ denote the value taken by $\tau(t)$ at the maximum
value of $P_{k}(t)$. When all queue sizes, channel rates and $b$ are given,
$\tau^*(t)$ is found by taking the first derivative of $P_{k}(t)$,
setting it to zero and solving for $\tau(t)$. Then, we have 
\begin{align}
\tau^*(t)=\frac{1}{r_{k}(t)}\ln\left( 1 + \frac{r_{k}(t)} {\sum_{\substack{\scalebox{0.7}{n=1}\\ \scalebox{0.7}n \neq \scalebox{0.7} k}}^N
r_n(t)} \right) \label{eq:tau}.
\end{align}
When we plug $\tau^*(t)$ into equation \eqref{eq:success}, we obtain
the maximum value of $P_k(t)$ denoted by $P_k^*(t)$ as follows:
\begin{align}
P_{k}^*(t)=\left( \frac{\sum_{\substack{\scalebox{0.7}{n=1}\\ \scalebox{0.7}n \neq \scalebox{0.7} k}}^N r_n(t)}  {\sum_{n=1}^N
r_n(t)} \right)^{\frac{\sum_{n \neq k}^N r_n(t)} {r_{k}}}\left(
\frac{r_k(t)}{\sum_{n=1}^N r_n(t)} \right)\label{eq:success2}.
\end{align}
The  limit $\lim_{b \rightarrow \infty} P_{k}^*(t)$ can be directly evaluated by using
L'Hospital's rule,
and one can show that as $b \rightarrow
\infty$ then $P_k^*(t) \rightarrow 1$. Hence, the user with the
maximum weighted throughput is guaranteed to be scheduled with DMW-RS. 

Different from DMW-AB,  DMW-RS introduces some amount of
overhead in terms of mini-slots for contention resolution. Now, we turn our attention to
develop an efficient method so that $\EE[M(t)]$ is minimized as much
as possible. We note that to minimize $\EE[M(t)]$ we need to find $\tau^*(t)$. Clearly,
in order to find the optimal $\tau^*(t)$, global knowledge of
the network is required, i.e., $w_n(t)$ for all $n$. Next, our goal is to find a method for tracking and estimating
$\tau^*(t)$ without requiring global network information.

In practice, users have finite buffer size. Let $B$\footnote{$B$ should be sufficiently large to avoid any throughput loss. In our algorithms since as $b$ increases the condition on the maximum queue size in Theorem 1 becomes less restrictive, then the same performance can be achieved even with small buffer sizes.}  be the maximum buffer size of each user. Then, the maximum weighted rate that a user can achieve at any
time slot is equal to $w_{m}=B \times r_L$ since at most $B$
packets can be kept at each queue buffer and the maximum
transmission rate is $r_L$. Let $\tau^*_l $ and $\tau^*_u$ be the
minimum and maximum values that $\tau^*(t)$ can take at any time
slot $t$, respectively. From \eqref{eq:tau}, we approximate that $\tau^*_l$ is achieved nearly
when $w_n(t)=w_{m}$ for all $n$. Thus,
$\tau^*_l=\frac{1}{b^{w_{m}}}\ln\left( 1 + \frac{1} {N-1} \right)$.
Also, $\tau^*_u$ occurs  when $w_n(t)=0$ for all $n$. We note that this approximation is more accurate when $N$ is sufficiency large. Then,
$\tau^*_u=\ln\left( 1 + \frac{1} {N-1} \right)$. The range of
$\tau^*(t)$ is given as $\tau^*_l \leq \tau^*(t)\leq \tau^*_u  \quad \forall t$.

Let $\tilde{\tau}(m,t)$ be
the estimated\footnote{The estimation requires the knowledge of $N$, which can be done by each user online~\cite{Bianchi:Nestimate}. In this work, we assume that $N$ is fixed, and known by each user. } $\tau^*(t)$ at mini-slot $m$ and time slot $t$, and let
$\delta$ be the constant estimation parameter, and $\Bc=\{b_1, b_2, \dots, b_V \}$ be the set of $b$ values, where $b_1>1$, and $b_g>b_h$, if $g>h$.  The estimation follows two steps:   first, we estimate $\tau^*(t)$  by decreasing  $\tilde{\tau}(m,t)$  when a collision occurs at mini-slot $m$ and
increasing it if no user transmits. In the second step, if the number of consecutive collisions is higher than a threshold denoted by $collthr$, then decrease $b$ to further reduce contention. If the number of consecutive idle slots is higher than a threshold denoted by $idlethr$ then increase $b$ so that the channel attempt probability increases. In Algorithm 1, the DMW-RS algorithm with updating $\tau$ is given.
\begin{algorithm}
\begin{itemize}
\item Step 0:  At $t=0$, every user sets $v=V$, $b=b_V$ $\tilde{\tau}(1,t)=\frac{1}{b^{\alpha}}\ln\left( 1 + \frac{1} {N-1} \right)$, $ \alpha=w_m$, 
\item  Step 1: Then, at each mini slot $m$, user $n$ generates an exponential random variable $X_n(m,t)$ with
rate, $r_n(t)=b^{w_n(t)}$.
\item Step 2:  If $X_n(m,t) < \tilde{\tau}(m,t)$, $a_n(m,t)=1$. Otherwise
$a_n(m,t)=0$.
\begin{itemize}
\item  If $\sum_{n=1}^N a_n(m,t)>1$, collision occurs. Then, update: $\alpha \gets \alpha + \delta$,  $\tilde{\tau}(m,t)
\gets \frac{1}{b^{\alpha}}\ln\left( 1 + \frac{1} {N-1} \right)$, $m \gets m+1$. If $NumConsColl > collthr$,  then $b  \gets b_{v-1}$. Go Step 1.
\item If $\sum_{n=1}^N a_n(m,t)=0$,  idle slot and update: $\alpha \gets \alpha - \delta$.  $\tilde{\tau}(m,t)
\gets \frac{1}{b^{\alpha}}\ln\left( 1 + \frac{1} {N-1} \right)$, $m \gets m+1$. If $NumIdleColl > idlethr$, then $b  \gets b_{v+1}$ and Go Step 1.
\end{itemize}
\item Step 3: If $ a_n(m,t)=1$ and $a_k(m,t)=0$ for all $k \neq n$, then user $n$ transmits. Let $\tau^*_{prev}$ be the $\tau$ value at which  the contention is resolved at time $t$. Set $t \gets t+1$,   $\tilde{\tau}(1,t+1) \gets \tau^*_{prev}$, and $b \gets b_V$. Update the weight and Go Step 1.
\end{itemize}
\caption{\small{DMW-RS with $\tau$ Update for user $n$ at time $t$}}
\label{tauupdate}
\end{algorithm}
We note that the algorithm keeps tracking the number of consecutive collisions and idle slots, and checks if the boundary conditions for $\tau$ are satisfied.  Also, in Step 3,  at the beginning of next data slot, the algorithm uses the $\tau$ value at which the contention is resolved in the previous data slot. This is because if the network dynamics change slowly or never change (e.g., channel
states and queue process change slowly from time slot $t$ to time
slot $t+1$), then one can expect that $\tau^*(t+1)$ will be similar to the value
of $\tau^*(t)$ with high probability. 
\vspace{-0.5cm}
\section{Numerical Results}
\label{sec:sim} In this section, we present our MATLAB simulation results. There are 20 users transmitting to an Access Point. We set $B=200$ packets. Packets
arrive independently at each slot according to a Poisson distribution for each user. The channel changes independently over users and time, we set  $\Rc=\{1,2,3,4,5\}$ (i.e., $L=5$) in packets.  We consider that users have heterogeneous channels, and we divide the users into two groups where there are ten users in each
group. The channel state distributions of each group are given as
follows: for the first group, $n=\{1,2,\ldots,10\}$,  $l=\{1,\ldots,5\}$, $p_n^l=\{0.15, 0.2, 0.2, 0.15, 0.3\}$,  for the second group,  $n=\{11,\ldots,20\}$,  $l=\{1,\ldots,5\}$, $p_n^l=\{0.25, 0.25, 0.15,
0.1, 0.25\}$. We set  $\delta$=2,  $collthr=idlethr=7$  for all $m$ and $t$ and for DMW-RS  $\Bc=\{1.1, 1.2, 2\}$, and for DMW-AB $b=2$. The simulation is run for $2 \times 10^5$ slots, which is sufficiently long for convergence of the queue sizes. We first evaluate the performance of the DMW-RS algorithm with
Improved $\tau$-update and DMW-AB in terms of the maximum supported traffic
load and the average queue size,  by comparing them with the centralized MW algorithm. For all algorithms, in Fig. 1(a) we plot
the mean total queue size summed over all the  users, as the overall arrival rate increases.
Clearly, as the overall arrival rate exceeds approximately 5 packets/slot then queue sizes suddenly increase, and the network becomes unstable for all three algorithms, which also means that the  achievable total throughput with these algorithms is equal to 5 packets/slot. In addition, the total average queue size is almost the same with all algorithms for every arrival rate.  This result implies that  DMW-AB and DMW-RS   achieve  almost the same performance in terms of the throughput and average delay  as  that of the centralized MW policy since DMW-AB and DMW-RS schedule the user with the highest weight at every slot with very high probability. Also, as $b$ increases, the probability goes to one and, consequently, DMW-AB and DMW-RS  achieve the same performance as that of the centralized Max-Weight policy. 

In Fig. 1(b), for different values of $\delta$ we depict $\EE[M(t)]$ with DMR-RS as the overall arrival rate increases when $N=20$. It can be seen that  for all values of $\delta$ and when the network is highly loaded and even unstable,  $\EE[M(t)]$ tends to be less than 5 mini-slots, and DMW-RS is robust to the parameter $\delta$.  
Lastly, in Figure 1(c),  we depict $\EE[M(t)]$ as the overall arrival rate increases with different number of users (i.e., $N$). When $N=50$ and the arrival rate is equal to $1.8$ , and $N=60$ with arrival rate of $1.6$, the network is not stable, and at these points  $\EE[M(t)]$ starts increasing.   However, $\EE[M(t)]$ is less than 4 mini-slots for all
simulation scenarios even if the channel and arrival processes are highly dynamic (i.e., independent over time.) and the number of users and the traffic rate are high.  
\begin{figure*}[t!]
\centering
\begin{floatrow}
\ffigbox[\FBwidth]
{
\subfloat[Avg. total queue sizes vs. overall arrival rate ]{\includegraphics[height=3.5cm,width=5.5cm]{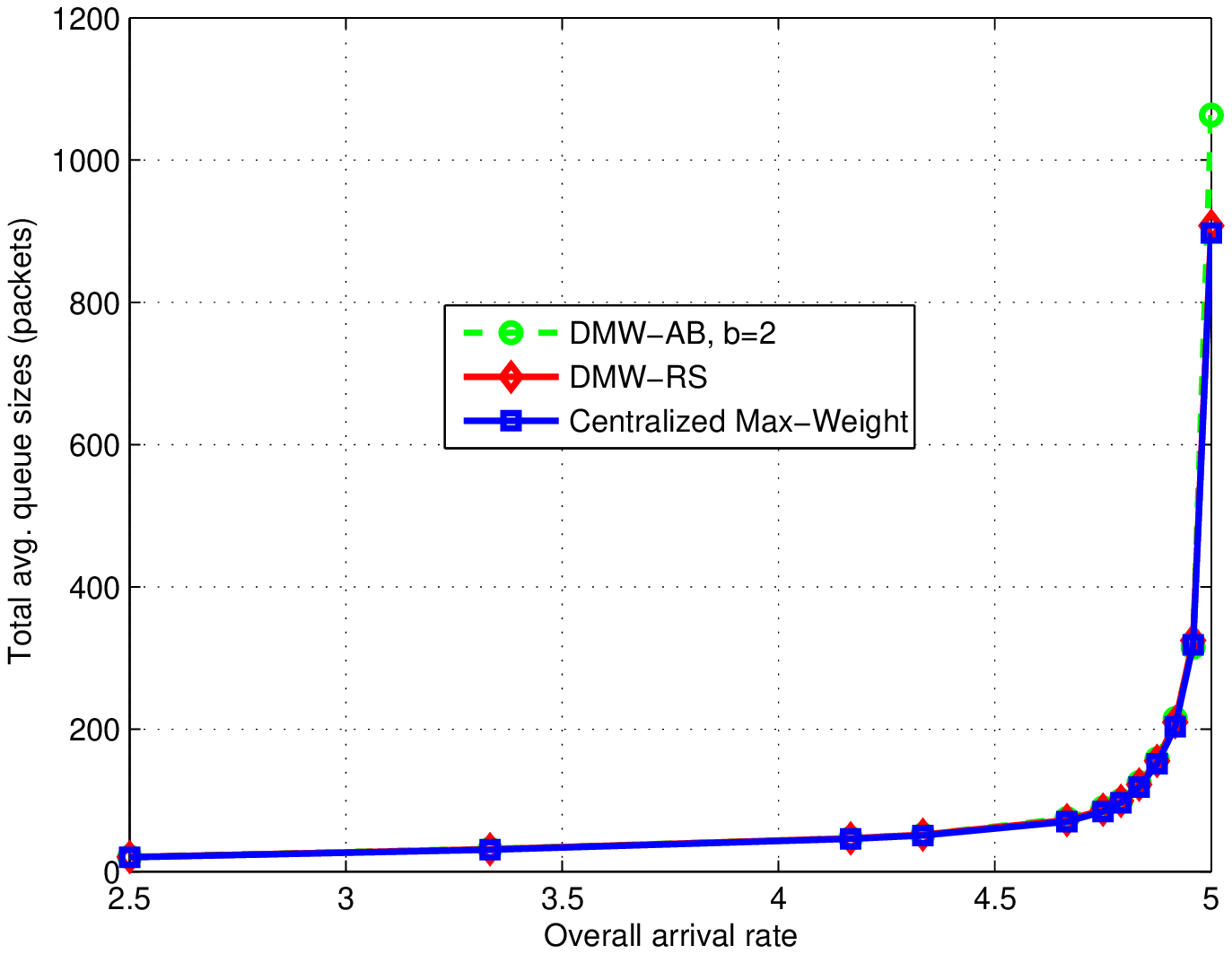}}
\quad
\subfloat[Avg. number of mini-slots with different $\delta$]{\includegraphics[height=3.5cm,width=5.5cm]{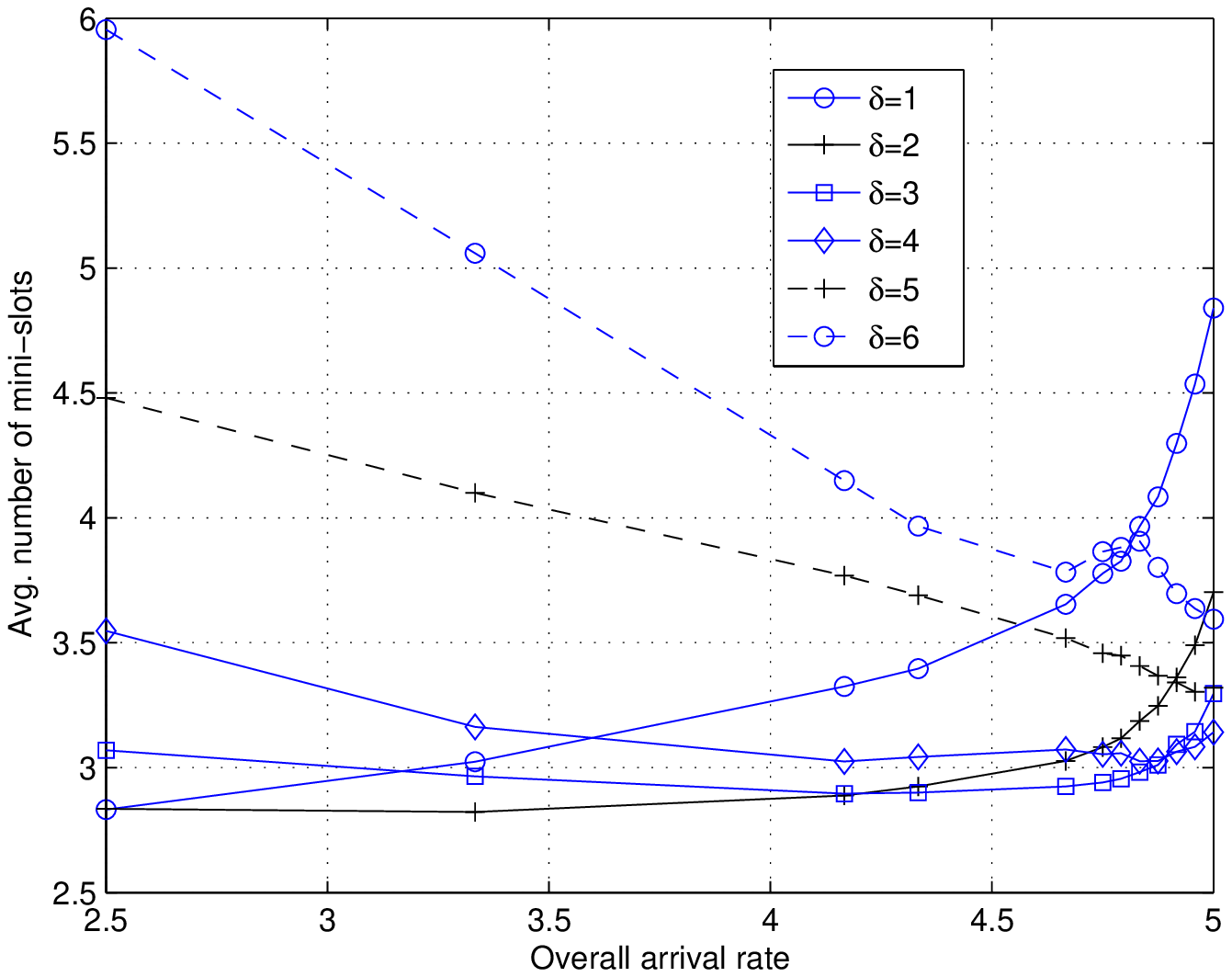}}
\quad
\subfloat[ Avg. number of mini-slots with different $N$]{\includegraphics[height=3.5cm,width=5.5cm]{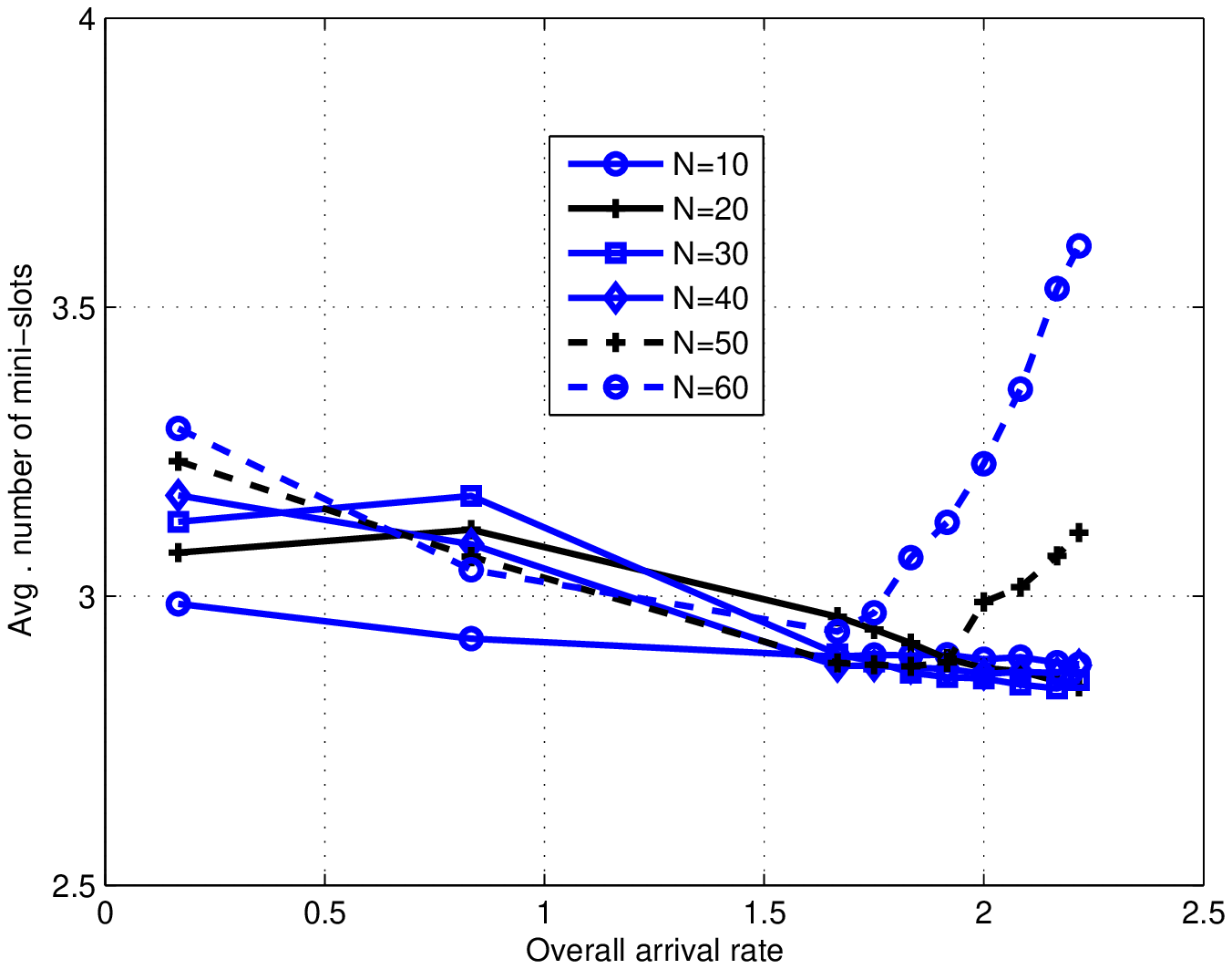}}
}{\caption{Performance of DMW-RS}}
\end{floatrow}
\end{figure*}
\vspace{-0.3cm}
\section{Conclusion }
\label{sec:conclusion}  In this letter, we have proposed two new
distributed scheduling policies for a fully-connected wireless
network over fading channels. By allowing arbitrary backoff time we have proved that the
first algorithm  behaves exactly the same as the centralized Max-Weight policy.  For practical systems, we then have developed a distributed algorithm
that operates with a reservation scheme with a low number of
mini-slots used for contention resolution, which can also achieve the same performance as the centralized Max-Weight policy.  Our future directions include designing optimal
algorithms to address the throughput-fairness and throughput-finite buffer trade-off  and also investigating the adaptation of DMW-RS to practical systems such as 802.11 based networks.

\vspace{-0.3cm}
\bibliographystyle{IEEEtran}
\bibliography{IEEEabrv,ref}
\vspace{-0.5cm}
\appendices
\section{Proof of Theorem \ref{thm:DMWAB}}
\label{sec:DMWAB} We prove Theorem \ref{thm:DMWAB} using Theorem
~\cite{Eryilmaz:ToN05}.  Let $w_{max}(t)=\max w_n(t)$ at time slot $t$.  According to the Theorem in [11], if given any $0 < \epsilon, \delta
<1$, there exists a constant $B >0$ such that: in any time  with probability greater than $1- \delta$, the complement of the following event $\chi$  occurs, where $\chi= \{l : \quad w_l(t) <
(1-\epsilon) w_{max}(t)\}$.
In order to show that $\pi(\chi') \geq 1-\delta$, which implies
Theorem 1, we next show that $\pi(\chi)=\sum_{l \in \chi} \pi_l < \delta$. Hence,
\begin{align*}
\pi(\chi)=\sum_{l \in \chi}
\frac{b^{w_k(t)}}{\sum_{i=1}^N  b^{w_i(t)}} \leq \frac{Nb^{(1-\epsilon)w^*(t)}}{\sum_{i= 1}^N  b^{w_i(t)}}
\end{align*}
Note that $\sum_{i =1}^N b^{w_i(t)} \geq b^{w_{max}(t)}$.
Therefore, we have
\begin{align}
\pi(\chi) \leq \frac{Nb^{(1-\epsilon)w_{max}(t)}}{\sum_{i=1}^N
b^{w_i(t)}} \leq \frac{Nb^{(1-\epsilon)w_{max}(t)}}{b^{w_{max}(t)}}
\end{align}
By using (4), one can show that if the following inequality in (5) holds then $\pi(\chi) < \delta$:
\begin{align}
w_{max}(t) > \frac{\log_b(N) +
\log_b(\frac{1}{\delta})}{(\epsilon)}\label{eq:maxW}
\end{align}
Since $w_{max}(t)$ is a continuous and
nondecreasing function of queue sizes, and $R_n(t) \geq R_{min} >0$, with $\lim_{\textbf{Q}(t)\rightarrow \infty} w_{max}(t)=
\infty$, there exists $B>0$ such that
\begin{align}
w_{max}(t) > \max_n Q_n(t)R_{min} > \frac{\log_b(N) +
\log_b(\frac{1}{\delta})}{(\epsilon)}\label{eq:maxW}
\end{align}
Hence, if queue sizes are large enough we can find a constant $B$ such that
\begin{align}
\max_n Q_n(t) > \frac{\log_b(N) +
\log_b(\frac{1}{\delta})}{(R_{min}) \epsilon}\triangleq
B\label{eq:maxW}
\end{align}
Hence, Theorem 1 holds and $\pi(\chi) < \delta$. Thus, DMW-RS is
throughput-optimal.
\end{document}